\def\ps@pprintTitle{%
 \let\@oddhead\@empty
 \let\@evenhead\@empty
 \def\@oddfoot{}%
 \let\@evenfoot\@oddfoot}
\newtheorem{thms}{Theorem}[subsection]
\newtheorem{thmss}{Theorem}[subsubsection]
\newtheorem{assumps}{Assumption}[subsection]
\newtheorem{assumpss}{Assumption}[subsubsection]
\begin{document}

\newcommand*{\QEDA}{\hfill\ensuremath{\blacksquare}}
\newcommand*{\QEDB}{\hfill\ensuremath{\square}}
\definecolor{LightCyan}{rgb}{0.88,1,1}
\definecolor{Gray}{gray}{0.85}

\begin{frontmatter}

\title{Control of a nonlinear continuous stirred tank reactor via event triggered sliding modes}




%
%

\author[asinha]{Abhinav Sinha\corref{cor1}}
\ead{sinha.abhinav.pg2016@mechatronics.iiests.ac.in}
\author[rkmJ,rkmI]{Rajiv Kumar Mishra}
\ead{mishra@sc.dis.titech.ac.jp}

\cortext[cor1]{Corresponding author}
\address[asinha]{Indian Institute of Engineering Science and Technology, Shibpur (Howrah), India}
\address[rkmJ]{Department of Computer Science, School of Computing, Tokyo Institute of Technology, Japan}
\address[rkmI]{School of Electronics Engineering, Kalinga Institute of Industrial Technology, Bhubaneswar, India}

\begin{abstract}
Continuous Stirred Tank Reactors (CSTR) are the most important and central equipment in many chemical and biochemical industries, that exhibit second order complex nonlinear dynamics. The nonlinear dynamics of CSTR poses many design and control challenges. The proposed controller guarantees a stable closed loop behavior over multiple operating points even in the presence of perturbations and parametric anomalies. An event driven sliding mode control is presented in this work to regulate the temperature and concentration states very close to the equilibrium points of a CSTR. The control is executed only when a predefined condition gets violated and hence, the controller is relaxed when the system is operating under tolerable limits of closed loop performance. A novel dynamic event triggering rule is presented to maintain desired performance with minimum computational cost. The inter event execution time is shown to be lower bounded by a finite positive quantity to exclude Zeno behavior. Sliding mode control (SMC) combined with event triggering scheme retains the inherent robustness of traditional SMC and aids in reducing computational load on the controller involved. Simulation results validate the efficiency of the proposed controller.
\end{abstract}

\begin{keyword}
CSTR, event-based sliding mode control, Riemann sampling, Lebesgue sampling, event conditions, triggering rule, inter event time.
\end{keyword}

\end{frontmatter}


\section{Introduction}
A Continuous Stirred Tank Reactor (CSTR) exhibits complex nonlinear dynamics and is a benchmark equipment in many process industries \cite{Sinha2016,sinhaCSTRacods} that require continuous addition and withdrawal of reactants and products. To maximize economy and to achieve optimal productivity in chemical plants, these reactors are maintained at very high conversion rates. A CSTR may be assumed to be somewhat opposite of an idealized well-stirred batch and tubular plug-flow reactors. The set of operating points should exhibit a stable steady state behavior under the influence of disturbances as well. Linear controllers designed for such process fail to deliver optimal performance outside the linear operating range. The PID controller is the most commonly used controller in industries due to its easy design and tuning properties. Feedback linearization has been also extensively used to control CSTR \cite{zhai} wherein the controller fails to deliver under varying transient behavior of the plant model due to the non-adaptive nature of the controller. A linear controller using Taylor's linearization has been designed assuming bounded uncertainty in \cite{AIC:AIC690340708} and in \cite{ref18326137}, which are again based on operation of CSTR in a limited regime. It has been discussed in \cite{ALVAREZRAMIREZ19941743} that use of local linearization cannot ensure global stability. Input-output feedback linearization method proposed in \cite{acs} also failed because it requires continuous measurement of states, which is quite expensive and impractical in practical scenario. In \cite{acs}, a method based on state coordinate transformation has been studied for linear input state behavior. Observer based designs for state measurements have been discussed in \cite{doi:10.1021/ie990186e,CHEN2004501,doi:10.1021/ie0608713,GRAICHEN2009473,DICICCIO,HOANG,ANTONELLI} to name a few. While the method in aforementioned studies ensured asymptotic stability, the effects of disturbance was overlooked and the design failed to deliver desired response under varying process conditions. A high gain controller used in \cite{ichem} exhibited quicker response but resulted in unwanted control effort saturation. While performing linearization of a plant model, there remains a part of transformed system which is non linearizable \cite{ray2} and has zero dynamics which cannot be ignored \cite{acs}. In spite of being a non-model based (model-free) control, techniques such as adaptive and fuzzy control do not yield optimal performance under widely varying and fast process dynamics.\\

In this work, we propose a controller based on paradigms of event based sliding mode control. Sliding Mode Control (SMC) \cite{yan2017,sks1998,pbbce066e,utkin,intro,smc,zak,slot} is a control scheme which guarantees finite time convergence and provides robust operation over the entire regime with complete rejection to matched perturbations that may creep in the system from input channel. The advantage of using this control is that we can tailor the system dynamical behavior by a particular choice of sliding function. SMC, used in conjunction with event triggered control, retains its robustness, as well as event triggering approach aids in saving energy expenditure. When measured variables of a system do not deviate frequently, event based control offers numerous advantages over time triggered control. The control is executed only when needed, thus computational complexity is also reduced.

%
%
%
%
%

\section{Plant Dynamic Model}
Chemical reactions in a reactor can be characterized as endothermic or exothermic. In order to maintain the temperature of the reactor at a desired reference, a finite amount of energy is required to be added to or removed from the reactor. Usually, a CSTR operates at steady state with contents well mixed, so modeling does not involve significant variations in concentration, temperature or reaction rate throughout the vessel. Since internal states of the system under consideration, i.e, temperature and concentration are identical everywhere within the reaction vessel, they are the same at the exit as they are anywhere else in the tank. Consequently, the temperature and concentration at the exit are modeled as being the same as those inside the reactor. In situations where mixing is highly nonideal, the well mixed model fails and nonideal CSTR model must be formulated.\\

In our study, we have adopted a nonlinear model of a CSTR under the assumption that contents are well mixed and hence, temperature and concentration are identical everywhere within the reaction vessel. This assumption simplifies the analysis of the model under consideration. However, if these assumptions are violated, then one has to resort to non-ideal mathematical model of CSTR. In a non-ideal CSTR, mixing is not uniform throughout the vessel and as a consequence, there occurs bypassing and dead zones (stagnant regions). The fluid does not pass through the stagnant region, resulting in lesser volume of the CSTR than that in the case of an ideal CSTR with perfect mixing. Due to this, the fluid passes through the CSTR rapidly and the transients in the concentration die very quickly. Moreover, the flow through the reactor also becomes less than the total volumetric flow rate in the CSTR due to bypassing. Once again, the transients in the concentration die out quickly.\\

In this work, we are concerned with a dynamic description of the reactor in which mixing is adequate \cite{Sinha2016}. Thus, an ideal CSTR model as provided in \cite{ray} has been adopted in our study. Presence of exponential terms in the modeling equations make the description a nonlinear one. A complex chemical reaction occurs in CSTR, \emph{e.g.} conversion of a hazardous chemical waste (reactant) into an acceptable and tolerable chemical (product). Under the assumption of complete mixing, the reactor gets cooled in a continuous manner. The volume of the chemical product \textbf{B} is equal to the volume of the input reactant \textbf{A}. The reactor is assumed to be non isothermal and exhibiting an irreversible exothermic first order chemical reaction \textbf{A} $\rightarrow$ \textbf{B}.\\
The dynamic model is then given as
\begin{align}\label{cstrdynamicmodel}
\frac{dC_A}{dt'} &= \frac{F}{V} (C_{A_f} - C_A) - r \nonumber \\
\frac{dT}{dt'} &= \frac{F}{V} (T_f - T) + \frac{(-\Delta H)}{\rho C_p} r - \frac{hA}{V\rho C_p} (T-T_c)
\end{align}
\begin{equation}\label{eq:rate}
r = k_0 exp({-\frac{E}{RT}}) C_A
\end{equation}
\begin{equation}\label{eq:heatTransferTerm}
hA = \frac{aF_c^{b+1}}{F_c + \frac{aF_c^b}{2\rho_c C_{p_c}}}
\end{equation}
$a$,$b$ are CSTR model parameters and $hA$ is the heat transfer term in (\ref{eq:heatTransferTerm}). Model parameters of significance are given in table \ref{tblcheck:parametermeaning} and a schematic diagram of CSTR is shown in figure \ref{fig:cstrschematics}.
\begin{figure}[H]
\centering
\includegraphics{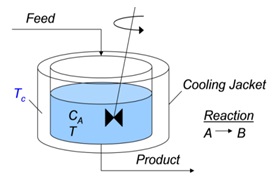}
\caption{Schematic of a CSTR}
\label{fig:cstrschematics}
\end{figure}
\begin{table}[H]
\centering
\setlength{\doublerulesep}{2\arrayrulewidth}
\begin{tabular}{|| l | c | r ||}
  \hline \hline
\textbf{Meaning}&\textbf{Symbol}&\textbf{Unit} \\ \hline

  $1^{st}$ order reaction rate constant & $k_0$ & $min^{-1}$\\ \hline

  inlet concentration of \textbf{A} & $C_{A_f}$ & $kmol/m^3$ \\ \hline

steady state flow rate of \textbf{A} & $F$ & $m^3/min$ \\ \hline

density of the reagent \textbf{A} & $\rho$ & $g/m^3$ \\ \hline

specific heat capacity of \textbf{A} & $C_p$ & $cal/\degree C g $ \\ \hline

heat of reaction & $\Delta H$ & $cal/kmol$ \\ \hline

density of coolant& $\rho_c$ & $g/m^3$ \\ \hline

specific heat capacity of coolant& $C_{p_c}$ & $cal/\degree C g$ \\ \hline

volume of the CSTR& $V$ & $m^3$ \\ \hline

coolant flow rate& $F_c$ & $m^3/min$ \\ \hline

reactor temperature& $T$ & $K$ \\ \hline

reactor concentration of \textbf{A}& $C_A$ & $kmol/m^3$ \\ \hline

activation energy& $E$ & $J/mol$ \\ \hline

universal ideal gas constant& $R$ & $J/mol K$ \\
  \hline \hline
\end{tabular}
\caption{CSTR model parameters and their meanings}
\label{tblcheck:parametermeaning}
\end{table}

A computationally more convenient form of the above modeling equations are presented in state space formulation below. For original convention and nomenclature, the reader is suggested to refer \cite{ray}.
\begin{align}\label{eq:statemodelcstr}
\dot{x_1}&=-x_1 + D_a(1-x_1) \exp\Big(\frac{x_2}{1+x_2 / \gamma}\Big) - d_2 \nonumber\\
\dot{x_2}&=-x_2 + BD_a(1-x_1) \exp\Big(\frac{x_2}{1+x_2 / \gamma}\Big) - \beta (x_2 -  x_{2_{c_0}})+ \beta u_T + d_1
\end{align}
This formulation \ref{eq:statemodelcstr} utilizes dimensionless modeling of CSTR for which the parameters are tabulated in table \ref{tbl:dimensionlessparamters}. Parameters $d_1$ and $d_2$ are bounded and measurable disturbances. The nominal coolant temperature is $x_{2_{c_0}}$. It is worthy to note that there are two ways to manipulate the observed states (outputs)-- coolant temperature and input feed flow. In this study, we have used the coolant temperature as our control input to regulate the temperature of the CSTR. This is because, in industrial environments, temperature becomes more critical to be controlled in order to avoid any secondary reaction in the reactor. It should be noted that while designing a controller to regulate temperature, the other state, i.e., composition should not be allowed to enter the region of instability. Furthermore, design of a second controller to regulate composition is also discussed in this work.
\begin{table}[H]\label{tbl:dimensionlessparamters}
\centering
\setlength{\doublerulesep}{2\arrayrulewidth}
\begin{tabular}{ ||l|c|| }
 \hline
  \hline
  Description & Parameter \\ \hline \hline
 ratio of activation energy to average kinetic energy & $\gamma = E/RT_{f_0}$ \\
  adiabatic temperature rise & $B = \frac{(-\Delta H)C_A{}_{_{f_{}{_0}}} \gamma}{\rho C_p T_f{}_{_0}}$ \\

  Damkohler number & $D_a = k_0 exp(-\gamma )V/ F_0$ \\

  heat transfer coefficient & $\beta = hA/\rho C_p F_0$ \\

  dimensionless time & $t=t'(F_0/V)$ \\

  dimensionless composition & $x_1 = (C_A{}_{_{f_{}{_0}}} - C_A)/C_A{}_{_{f_{}{_0}}}$ \\

  dimensionless temperature & $x_2 = \gamma(T-T_f{}_{_0})/T_f{}_{_0}$ \\

  dimensionless control input & $u_T = \gamma(T-T_c{}_{_0})/T_f{}_{_0}$ \\

 dimensionless control input & $u_F = (F-F_0)/F_0$ \\

  feed temperature disturbance & $d_1 = \gamma (T_f - T_f{}_{_0})/T_f{}_{_0}$ \\

  feed composition disturbance & $d_2 = (C_A{}_{_f}- C_A{}_{_{f_{}{_0}}})/C_A{}_{_{f_{}{_0}}}$ \\
  \hline \hline
\end{tabular}
\caption{Dimensionless parameters in CSTR modeling}
\end{table}
For ease of controller synthesis, let us formulate a functional description of the state equations given by (\ref{eq:statemodelcstr}).
\begin{align}\label{eq:statemodelcstrfunctional}
 \dot{x_1}&= f_1(x_1,x_2) - d_2 \nonumber \\
 \dot{x_2}&= f_2(x_1,x_2) + \beta u_T + d_1
\end{align}

\section{Controller Synthesis}
The controller in this study is synthesized without any linearization of the dynamics. An event based sliding mode control has been used to implement the controller.
\subsection{Control Objective}
Primary objective of the control scheme is to maintain the states to a desired reference value with minimum computational expense. For computational purposes, it is desirable to define the following error candidates. The deviation from the desired temperature is given by
\begin{equation}\label{eq:temperatureError}
e_2(t)= x_2(t) - x_{2_{ref}} (t)
\end{equation}
Similarly, error variable for the state representing composition is given by
\begin{equation}\label{eq:compositionError}
e_1(t)= x_1(t) - x_{1_{ref}} (t)
\end{equation}
The control effort must be designed robust enough to achieve accurate desired reference tracking, reject disturbances and deliver acceptable results quickly. Stated alternatively, the error variables are required to vanish or at least settle in close vicinity of zero after a transient of acceptable duration.

\subsection{Traditional sliding mode controller}
Sliding Mode Control (SMC) \cite{utkin,utkin2} is long known for its inherent robustness. The switching nature of the control is used to nullify exogenous bounded disturbances and matched uncertainties. The switching happens about a manifold in state space known as sliding manifold. The control forces the state trajectories monotonically towards the sliding manifold and this phase is regarded as reaching phase. When state trajectories reach the manifold, they remain there for all future time, thereby ensuring that the system dynamics remains independent of bounded disturbances and matched uncertainties. This phase is regarded as sliding phase. Thus, the controller has a reaching phase (trajectories in phase space emanate and move towards the sliding manifold) and a sliding phase (trajectories in the phase space that reach the sliding manifold try to remain there).

\subsubsection{Reaching Phase}
Let the manifold discussed above be described mathematically as $\sigma(x)$. In order to drive state trajectories onto this manifold, a proper discontinuous control effort $u(t,x)$ needs to be synthesized for which the following inequality is respected.
\begin{equation}
\sigma^T(x)\dot{\sigma}(x) \leq -\eta \|\sigma(x)\|
\end{equation}
with $\eta$ being positive and is called the reachability constant.\\
$\because$
\begin{equation}
\dot{\sigma}(x) = \frac{\partial \sigma}{\partial x} \dot{x} = \frac{\partial \sigma}{\partial x} f(t,x,u)
\end{equation}
$\therefore$
\begin{equation}
\sigma^T(x) \frac{\partial \sigma}{\partial x} f(t,x,u) \leq -\eta \|\sigma(x)\|
\end{equation}
It is clear that the control $u$ can be synthesized from the above equation.

\subsubsection{Sliding Phase}
The motion of state trajectories confined on the switching manifold is known as \emph{sliding}. A sliding mode is said to exist in close vicinity of the manifold if the state velocity vectors are directed towards the manifold in its neighbourhood \cite{zak,utkin}. Under this circumstance, the manifold is called attractive \cite{zak}, i.e., trajectories starting on it remain there for all future time and trajectories starting outside it tend to it in an asymptotic manner.
\begin{align}
\because \hspace{3mm}\dot{\sigma}(x) = \frac{\partial \sigma}{\partial x} f(t,x,u) \nonumber \\
\text{Hence, in sliding motion} \nonumber \\
\frac{\partial \sigma}{\partial x} f(t,x,u) = 0
\end{align}
Then $u = u_{eq}$ (say) be a solution and is generally referred to as the equivalent control. This $u_{eq}$ is not the actual control applied to the system but can be thought of as a control that must be applied on an average to maintain sliding motion. It is mainly used for the analysis of sliding motion \cite{yan2017}. \\

By the theory of sliding modes, let us formulate the sliding manifold as
\begin{equation}\label{eq:slidingManifold}
\sigma (t) = \lambda_1 x_1(t) + \lambda_2 x_2(t)
\end{equation}
where $\lambda_1$ and $\lambda_2$ are the coefficient weights which can be tuned as per performance needs.In design, it is not the actual weights that matter, rather relative weights are of significance. The sliding manifold can alternatively be written in error dynamical form as
\begin{equation}\label{eq:slidingManifoldError}
\sigma (t) = \lambda_1 e_1(t) + \lambda_2 e_2(t)
\end{equation}
During sliding, $\dot{\sigma} (t) = 0$ and the corrective term used to force the trajectories onto the sliding surface is chosen as $\mu sign(\sigma (t))$, where $\mu$ is the adjustable gain.
We now proceed to design the control law for the discussed case.
\begin{thms}\label{thm:controlLaw}
Given plant dynamics (\ref{eq:statemodelcstrfunctional}), errors (\ref{eq:temperatureError},\ref{eq:compositionError}) and the sliding manifold (\ref{eq:slidingManifoldError}), the control law due to traditional sliding mode controller is given by
\begin{equation}\label{eq:controlLaw}
u_T (t) = -\lambda_2^{-1} \beta^{-1}(\lambda^{T} f(x(t)) + \mu sign(\sigma(t)))
\end{equation}
where $\lambda^{T} = [\lambda_1 \hspace{2mm} \lambda_2]$ and the function $f(x)$ is given as
\begin{equation}\label{f(x)}
f(x) =
\begin{bmatrix}
f_1(x_1,x_2) - d_2 - \dot{x}_{1_{ref}}\\
f_2(x_1,x_2) +d_1 -\dot{x}_{1_{ref}}
\end{bmatrix}
\end{equation}
\end{thms}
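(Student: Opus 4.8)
The plan is to obtain the control law constructively, by enforcing the reaching dynamics dictated by the corrective term $\mu\, sign(\sigma(t))$ on the time derivative of the sliding manifold (\ref{eq:slidingManifoldError}). First I would differentiate the manifold to get
\begin{equation}
\dot{\sigma}(t) = \lambda_1 \dot{e}_1(t) + \lambda_2 \dot{e}_2(t).
\end{equation}

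Next I would compute the error rates from $\dot{e}_i = \dot{x}_i - \dot{x}_{i_{ref}}$ by substituting the functional plant dynamics (\ref{eq:statemodelcstrfunctional}), giving $\dot{e}_1 = f_1(x_1,x_2) - d_2 - \dot{x}_{1_{ref}}$ and $\dot{e}_2 = f_2(x_1,x_2) + \beta u_T + d_1 - \dot{x}_{2_{ref}}$. Separating the term that carries the control input and collecting the remaining drift into the vector $f(x)$ of (\ref{f(x)}) together with $\lambda^{T} = [\lambda_1 \;\; \lambda_2]$, the manifold rate reduces to the compact form
\begin{equation}
\dot{\sigma}(t) = \lambda^{T} f(x(t)) + \lambda_2 \beta\, u_T(t).
\end{equation}

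To close the loop I would impose the constant rate reaching law $\dot{\sigma}(t) = -\mu\, sign(\sigma(t))$, which is the natural realization of the corrective term introduced just before the statement. Equating this with the compact form above and solving the resulting scalar equation for $u_T$ yields the claimed law (\ref{eq:controlLaw}) directly, provided $\lambda_2 \neq 0$ and $\beta \neq 0$ so that the factor $(\lambda_2 \beta)^{-1}$ is well defined.

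Finally I would confirm that this choice is consistent with the reachability requirement of the reaching phase. Substituting the closed-loop relation back gives $\sigma^{T}(t)\dot{\sigma}(t) = -\mu\, \sigma^{T}(t)\, sign(\sigma(t)) = -\mu\, \|\sigma(t)\|$, so that $\sigma^{T}(x)\dot{\sigma}(x) \leq -\eta\|\sigma(x)\|$ holds with reachability constant $\eta = \mu > 0$. The derivation is essentially algebraic, so no step is genuinely hard; the only points requiring care are the invertibility of $\lambda_2 \beta$ and, more conceptually, the fact that the single scalar input $u_T$ is able to absorb the entire right-hand side of $\dot{\sigma}$. This cancellation of the drift $f_2$ and of $d_1$ is admissible precisely because the disturbances are matched and measurable, as assumed in the plant description, and this matching condition is the one substantive structural hypothesis the argument leans on.
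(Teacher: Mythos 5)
Your proof follows essentially the same route as the paper: differentiate $\sigma$, substitute the plant dynamics, collect the drift into $\lambda^{T}f(x)$, and solve the reaching law $\dot{\sigma}=-\mu\,sign(\sigma)$ for $u_T$; the paper leaves that reaching-law step implicit, which you usefully make explicit along with the $\lambda_2\beta\neq 0$ caveat and the $\eta=\mu$ reachability check. One minor observation: your derivation correctly produces $-\dot{x}_{2_{ref}}$ in the second component of $f(x)$, whereas the displayed $f(x)$ in the statement writes $-\dot{x}_{1_{ref}}$ there, which is evidently a typo in the paper rather than a flaw in your argument.
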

\begin{proof}
Before proceeding towards proof, we make an assumption on the nature of the function $f(x)$.
\begin{assumps}
The function $f(x)$ satisfies Lipschitz conditions. Hence, we can write $\| f(x) - f(y)\| \leq \bar{L}\|x - y\|$ for some $x$ and $y$ in the domain $\mathbb{D_L} \subset \mathbb{R}^n$ with $\bar{L}$ as the Lipschitz constant.
\end{assumps}
We now proceed to a formal proof. From (\ref{eq:slidingManifoldError}), we have
 \begin{align}\label{eq:sigmadot}
\sigma (t) &= \lambda_1 e_1(t) + \lambda_2 e_2(t) \nonumber \\
\Rightarrow \dot{\sigma} (t) &= \lambda_1 \dot{e}_1(t) + \lambda_2 \dot{e}_2 (t)\nonumber \\
\Rightarrow \dot{\sigma} (t) &= \lambda_1 (\dot{x}_1 (t)- \dot{x}_{1_{ref}})  + \lambda_2 (\dot{x}_2(t) - \dot{x}_{2_{ref}}) \nonumber \\
\Rightarrow \dot{\sigma} (t) &=\lambda_1 (f_1(x_1,x_2) - d_2 - \dot{x}_{1_{ref}}) + \lambda_2 (f_2(x_1,x_2) + \beta u_T + d_1 - \dot{x}_{2_{ref}}) \nonumber \\
\Rightarrow \dot{\sigma} (t) &=\lambda^{T} f(x(t)) + \lambda_2 \beta u_T(t) \\
\therefore u_T (t) &= -\lambda_2^{-1} \beta^{-1}(\lambda^{T} f(x(t)) + \mu sign(\sigma(t))) \nonumber
\end{align}
where $\lambda^{T} = [\lambda_1 \hspace{2mm} \lambda_2]$ and $f(x) =
\begin{bmatrix}
f_1(x_1,x_2) - d_2 - \dot{x}_{1_{ref}}\\
f_2(x_1,x_2) +d_1 -\dot{x}_{1_{ref}}
\end{bmatrix}$\\
This completes the theorem along its proof.
\end{proof}

\subsection{Event based control}
Recently, there has been a tremendous growth of interest in the area of event based systems due to requirements of reduced computational cost. The challenge, however, in this type of control is to maintain performance, stability, optimality, etc. in the presence of uncertainties and reduced computation/communication. A modern control system consists of a computer and the signal under consideration is sampled periodically to cater the needs of a classic sampled data control system. Under such scheme, the interval between two successive clock pulses is predetermined and fixed. The sampling takes place along the \emph{horizontal} axis, also known as \emph{Riemann sampling}. An alternate, more natural and efficient way is to sample along the \emph{vertical} axis, also known as \emph{Lebesgue sampling} \cite{1184824}. In the latter case, the sampling is not periodic rather it depends on the value of previous sample or certain \emph{conditions} that need to be violated to bring forth the next clock pulse. These \emph{conditions} are some noticeable changes (\emph{events} or \emph{event conditions}) on which the next sampling instant depends.\\

This type of control seems to be a reasonable choice in applications where signal of interest slowly varies. In chemical process industries that contain many production units, primary units are separated by buffer units. Each change in the unit can cause upset and hence it is desirable to keep the change in process variables less frequent. Event based control comes handy in such applications. No action is taken unless there is a huge upset. It is also advantageous to use event based control when the steady state value of a process variable needs to be fixed irrespective of the manner in which the states evolve. For early contributions on event based control, readers are requested to refer \cite{marconi,behera,lingshi,mazo,tabuda,Aström2008,anta,chopra,lemmon2010} and references therein.

\subsubsection{Event based sliding mode control}
Since, next sample instant is dependent on the previous sampling information, the control (\ref{eq:controlLaw}) is held constant between successive events or sampling instants. The control is not updated periodically and is held at the previous value in the interval $[t_k,t_{k+1})$ . This, however, introduces a discretization error between the states of the system.
\begin{equation}\label{eq:discretizationError}
\epsilon(t) = x(t)- x(t_k)
\end{equation}
such that at $t = t_k$, $\epsilon(t)$ vanishes. The term $t_k$ is the triggering instant at $k^{th}$ sampling instant. The control gets updated at $t_k$ instants only. The sampling is not periodic and hence $t_{k+1} - t_k \neq constant$.\\
Hence, the control signal from (\ref{eq:controlLaw}) modifies to yield the event triggered sliding mode control law
\begin{equation}\label{eq:controlLawEvent}
u_T (t) = -\lambda_2^{-1} \beta^{-1}(\lambda^{T} f(x(t_k)) + \mu sign(\sigma(t_k)))
\end{equation}
\begin{thmss}
Consider the system described by (\ref{eq:statemodelcstrfunctional}), error candidates (\ref{eq:temperatureError},\ref{eq:compositionError}) and (\ref{eq:discretizationError}), sliding manifold (\ref{eq:slidingManifoldError}) and control law of (\ref{eq:controlLawEvent}). Then, the event triggered control law (\ref{eq:controlLawEvent}) makes the system stable in the sense of Lyapunov and sliding mode is said to exist in the vicinity of the manifold (\ref{eq:slidingManifoldError}). The manifold is an attractor if reachability to the surface is ascertained for some reachability constant $\eta>0$.
\end{thmss}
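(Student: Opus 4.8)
The plan is to carry out a Lyapunov-based reachability argument built on the sliding variable $\sigma$, following the same logic as the continuous-time case but now accounting for the fact that the control is frozen at its sampled value over each interval $[t_k,t_{k+1})$. First I would take the candidate $V(t)=\tfrac{1}{2}\sigma^{2}(t)$, which is positive definite in $\sigma$, so that establishing the reachability inequality $\dot{V}\le -\eta\|\sigma\|$ simultaneously yields stability in the sense of Lyapunov and the claimed attractivity of the manifold.

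Next I would differentiate $V$ along the closed-loop trajectories. Using the relation $\dot{\sigma}(t)=\lambda^{T}f(x(t))+\lambda_2\beta u_T(t)$ derived earlier, but substituting the event-triggered control (\ref{eq:controlLawEvent}) evaluated at $t_k$ in place of the ideal control, the $\lambda^{T}f$ terms no longer cancel exactly. Instead one obtains
\[
\dot{\sigma}(t)=\lambda^{T}\bigl(f(x(t))-f(x(t_k))\bigr)-\mu\,\mathrm{sign}(\sigma(t_k)),
\]
so that the sampling mismatch appears as the residual $\lambda^{T}\bigl(f(x(t))-f(x(t_k))\bigr)$. This is precisely where the discretization error $\epsilon(t)=x(t)-x(t_k)$ enters the analysis.

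The third step is to bound this residual. Invoking the Lipschitz assumption on $f$ together with the Cauchy--Schwarz inequality gives $\|f(x(t))-f(x(t_k))\|\le \bar{L}\|\epsilon(t)\|$, whence
\[
\dot{V}(t)=\sigma\dot{\sigma}\le \|\lambda\|\,\bar{L}\,\|\epsilon(t)\|\,\|\sigma(t)\|-\mu\,\sigma(t)\,\mathrm{sign}(\sigma(t_k)).
\]
Assuming (as holds immediately after each trigger and throughout a sufficiently tight inter-event interval) that $\sigma(t)$ retains the sign of $\sigma(t_k)$, the last term collapses to $-\mu\|\sigma\|$, yielding $\dot{V}\le\bigl(\|\lambda\|\bar{L}\|\epsilon(t)\|-\mu\bigr)\|\sigma\|$. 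The reachability inequality $\dot{V}\le -\eta\|\sigma\|$ then follows provided the triggering rule keeps $\|\epsilon(t)\|$ below the threshold $(\mu-\eta)/(\|\lambda\|\bar{L})$, which is exactly the role of the event condition: it must fire and reset $\epsilon$ to zero before that bound is violated.

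I expect the main obstacle to be the held sign term $\mathrm{sign}(\sigma(t_k))$: unlike the continuous case, one cannot simply replace $\sigma\,\mathrm{sign}(\sigma)$ by $\|\sigma\|$, so the argument must justify sign consistency over the inter-event interval (or absorb a possible sign flip into the gain margin of $\mu$). The supporting ingredient is the design of the triggering condition so that $\|\epsilon(t)\|$ stays small enough to guarantee the strict negativity $\mu-\|\lambda\|\bar{L}\|\epsilon(t)\|\ge\eta$; once this is in place, $\dot{V}\le -\eta\|\sigma\|$ holds, certifying Lyapunov stability, the existence of a sliding mode in a neighbourhood of $\sigma=0$, and attractivity of the manifold for the chosen $\eta>0$.
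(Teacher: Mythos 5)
Your proposal follows essentially the same route as the paper: the quadratic Lyapunov candidate in $\sigma$, substitution of the held control into $\dot{\sigma}$ so the residual $\lambda^{T}\bigl(f(x(t))-f(x(t_k))\bigr)$ appears, a Lipschitz bound on that residual in terms of $\epsilon(t)$, and a sign-consistency argument to turn $-\mu\,\sigma(t)\,\mathrm{sign}(\sigma(t_k))$ into $-\mu\|\sigma(t)\|$. In fact your final bound $\dot{V}\le\bigl(\|\lambda\|\bar{L}\|\epsilon(t)\|-\mu\bigr)\|\sigma\|$, together with the explicit requirement that the triggering rule keep $\|\epsilon(t)\|$ below $(\mu-\eta)/(\|\lambda\|\bar{L})$, is the correct form of the step the paper writes with a sign slip as $-\|\sigma(t)\|(\mu+\|\lambda^{T}\|\bar{L}\|\epsilon(t)\|)$, so your version is, if anything, the more careful one.
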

\begin{proof}
 Let us consider a Lyapunov candidate $V$ such that
\begin{equation}\label{eq:LyapunovCandidate}
V = \frac{1}{2}\sigma^T(t) \sigma(t)
\end{equation}
Time derivative of the candidate given in (\ref{eq:LyapunovCandidate}) for $t \in [t_k, t_{k+1})$ along the state trajectories yield
\begin{equation}
\dot{V} = \sigma(t) \dot{\sigma}(t)
\end{equation}
It can be written from (\ref{eq:sigmadot}),
\begin{equation}
 \dot{V} = \sigma(t) (\lambda^{T} f(x(t)) + \lambda_2 \beta u_T(t))
\end{equation}
Thus $\forall t \in [t_k, t_{k+1})$, it can be written as
\begin{align}
\dot{V} &= \sigma(t) (\lambda^{T} f(x(t)) - \lambda^{T} f(x(t_k)) - \mu sign(\sigma(t_k))) \nonumber \\
\dot{V} &\leq -\sigma(t) \mu sign(\sigma(t_k)) + \| \sigma(t) \| \| \lambda^T \| \| f(x(t)) - f(x(t_k)) \| \nonumber \\
\dot{V} &\leq -\sigma(t) \mu sign(\sigma(t_k)) + \| \sigma(t) \| \| \lambda^T \| \bar{L} \| x(t)- x(t_k) \| \nonumber \\
\dot{V} &\leq -\sigma(t) \mu sign(\sigma(t_k)) + \| \sigma(t) \| \| \lambda^T \| \bar{L}  \| \epsilon(t) \| 
\end{align}
As long as $\sigma(t) > 0$ or $\sigma(t) < 0$, the condition $sign(\sigma(t)) = sign(\sigma(t_k))$ is strictly met $\forall t \in [t_k, t_{k+1})$. Hence, when trajectories are just outside the sliding surface,
\begin{align}
\dot{V} &\leq -\| \sigma(t) \| \mu + \| \sigma(t) \| \| \lambda^T \|  \bar{L} \| \epsilon(t) \|  \nonumber \\
\Rightarrow \dot{V} &\leq -\| \sigma(t) \| (\mu +  \| \lambda^T \|  \bar{L} \| \epsilon(t) \|)  \nonumber \\
\Rightarrow \dot{V} &\leq - \eta \| \sigma(t) \|
\end{align}
with $\eta > 0$. This completes the proof of reachability. \QEDB\\
For stability, it is required to be shown that $\dot{V}<0$.\\
At $t=t_k$, $\| \epsilon(t)\| \rightarrow 0$ and the control signal is updated. Thus,
\begin{equation}
\dot{V} \leq -\| \sigma(t) \| (\mu +  \| \lambda^T \| \bar{L} \| \epsilon(t) \|)
\end{equation}
$\because \| \epsilon(t)\| \rightarrow 0 \Rightarrow \dot{V} < 0$ \\
This completes the proof of stability.
\end{proof}
For time instants between $[t_k,t_{k+1})$ the states show a tendency to deviate from the sliding manifold but remain bounded within a band near the manifold. The triggering instant $t_k$ is completely characterized by a triggering rule. Next sampling instant is by virtue of this criterion. As long as this criterion is respected, next clock pulse is not called upon and the control signal is maintained constant at the previous value. The triggering rule used in this work is given by
\begin{equation}\label{eq:triggeringRule}
\delta = \| \zeta e_i + \xi \dot{e}_i^2 \| - \psi (m_1 + m_2 \exp(-\varsigma t))
\end{equation}
with $i = 1,2$ for respective error in states, $\zeta > 0$, $\xi > 0$, $\psi \in (0,1)$, $m_1 \geq 0$, $m_2 \geq 0$, $m_1 + m_2 > 0$ and $\varsigma \in (0,1)$. The term $(m_1 + m_2 \exp(-\varsigma t))$ ensures a finite lower bound on inter event execution time and avoids accumulation of samples at same instant, known as \emph{Zeno behavior} in literature. The triggering scheme has been developed to schedule controller updates based on triggering of an event so as to achieve better resource efficiency. The temperature control loops usually have smooth measurements and large time constants, so slope of the error can be used to predict future error(s). Based on the current slope of the error, the controller predicts the error in future and adds additional control action to the controller output to make the loop more stable and to reduce the maximum deviation of process variable from set-point. The following relation completely determines the triggering instants in an iterative manner
\begin{equation}\label{eq:iterativeTriggeringRule}
t_{k+1} = \text{inf} \{t \in [t_k, \infty) \hspace{2mm}: \delta \geq 0\}
\end{equation}

The iterative relation (\ref{eq:iterativeTriggeringRule}) simply implies that the next sample at time $t=t_{k+1}$ appears when the parameter $\delta$ (from (\ref{eq:triggeringRule})) is positive, i.e., $\| \zeta e_i + \xi \dot{e}_i^2 \|$ exceeds the tolerable performance band $\psi (m_1 + m_2 \exp(-\varsigma t))$. The threshold parameter considered in this event-driven triggering scheme is not hardbound but is time varying. The accuracy adjustment parameter $\psi (m_1 + m_2 \exp(-\varsigma t))$ is chosen such that the tolerable limit becomes thinner as steady state is being reached. The inter event time is given by
\begin{equation}\label{eq:intereventTime}
T_k = t_{k+1} - t_k
\end{equation}
\begin{assumpss}
A finite but not necessarily constant delay $\Delta$ might occur during sampling and is unavoidable due to hardware characteristics. In such cases the control is maintained constant $\forall t_i \in [t_i^k + \Delta, t_i^{k+1} + \Delta)$. It has been assumed that $\Delta$ is negligible and has been neglected innocuously. Hence for our case, control is constant in the interval $[t_i^k, t_i^{k+1})$.
\end{assumpss}
\begin{thmss}
Consider the system described by (\ref{eq:statemodelcstrfunctional}), the control signal given in (\ref{eq:controlLawEvent}) and the discretization error (\ref{eq:discretizationError}). The sequence of triggering instants $\{t_k\}_{k=0}^\infty$ respects the triggering rule given in (\ref{eq:triggeringRule}). Consequently, \emph{Zeno} phenomenon is not exhibited and the inter event execution time $T_k$ is bounded below by a finite positive quantity such that
\begin{equation}\label{eq:eventLowerBound}
T_k \geq \frac{1}{\bar{L}}\hspace{2mm} ln \hspace{2mm} \Big( \frac{\bar{L}\|\epsilon\|_\infty}{\bar{L}(1 + \|\bar{B}\lambda_2^{-1} \beta^{-1}\lambda^{T}\|) \|x(t_k)\| + \|\bar{B}\| \mu} +1 \Big)
\end{equation}
where $\|\epsilon\|_\infty$ is the maximum discretization error.
\end{thmss}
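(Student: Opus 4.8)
The plan is to follow the standard Zeno-exclusion argument for event-triggered control: I would track the growth of the discretization error $\epsilon(t) = x(t) - x(t_k)$ across a single inter-event interval $[t_k, t_{k+1})$ and show that it cannot reach its maximal admissible value $\|\epsilon\|_\infty$ faster than the claimed time, which forces $T_k$ to be bounded below by a strictly positive constant. First I would observe that on $[t_k, t_{k+1})$ the sampled state $x(t_k)$ and hence the control (\ref{eq:controlLawEvent}) are held constant, so writing the functional dynamics (\ref{eq:statemodelcstrfunctional}) compactly as $\dot{x} = f(x) + \bar{B}u_T$ gives $\dot{\epsilon}(t) = \dot{x}(t) = f(x(t)) + \bar{B}u_T(t_k)$. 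Substituting the frozen control law and splitting $f(x(t)) = [f(x(t)) - f(x(t_k))] + f(x(t_k))$, I would take norms and invoke the triangle inequality together with the Lipschitz Assumption to write $\|f(x(t)) - f(x(t_k))\| \leq \bar{L}\|x(t) - x(t_k)\| = \bar{L}\|\epsilon(t)\|$, while anchoring the Lipschitz estimate at the regulated equilibrium yields $\|f(x(t_k))\| \leq \bar{L}\|x(t_k)\|$. Collecting the frozen-control contribution $\|\bar{B}\lambda_2^{-1}\beta^{-1}\lambda^{T}\|\,\bar{L}\|x(t_k)\| + \|\bar{B}\|\mu$ then produces the linear differential inequality
\[
\frac{d}{dt}\|\epsilon(t)\| \leq \|\dot{\epsilon}(t)\| \leq \bar{L}\|\epsilon(t)\| + \bar{L}\big(1 + \|\bar{B}\lambda_2^{-1}\beta^{-1}\lambda^{T}\|\big)\|x(t_k)\| + \|\bar{B}\|\mu.
\]

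Next I would apply the comparison lemma. Since $\epsilon(t_k) = 0$ by (\ref{eq:discretizationError}), integrating the scalar comparison system $\dot{y} = \bar{L}y + c$ with $c = \bar{L}(1 + \|\bar{B}\lambda_2^{-1}\beta^{-1}\lambda^{T}\|)\|x(t_k)\| + \|\bar{B}\|\mu$ and $y(t_k) = 0$ gives the closed-form majorant $\|\epsilon(t)\| \leq \frac{c}{\bar{L}}\big(e^{\bar{L}(t - t_k)} - 1\big)$. Because no event can fire while the triggering functional in (\ref{eq:triggeringRule}) stays below its threshold, the earliest the next trigger can occur is when $\|\epsilon\|$ first attains the maximal admissible discretization error $\|\epsilon\|_\infty$; hence $\|\epsilon\|_\infty \leq \frac{c}{\bar{L}}\big(e^{\bar{L}T_k} - 1\big)$. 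Isolating the exponential and taking logarithms then reproduces exactly the bound (\ref{eq:eventLowerBound}). Since its right-hand side is a finite, strictly positive quantity whenever $\|\epsilon\|_\infty > 0$, the inter-event times are uniformly bounded away from zero and Zeno behavior is precluded.

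The step I expect to be the main obstacle is rigorously bridging the concrete triggering functional (\ref{eq:triggeringRule}), which is phrased through $\zeta e_i + \xi \dot{e}_i^2$, and the abstract discretization error $\|\epsilon\|$ driving the comparison argument: one must argue that a threshold crossing of the former entails $\|\epsilon\|$ having grown to the value identified as $\|\epsilon\|_\infty$, which requires relating the scalar error signals $e_i,\dot{e}_i$ back to $\epsilon(t) = x(t) - x(t_k)$. A secondary subtlety is the anchoring of the Lipschitz bound at the operating equilibrium needed to justify $\|f(x(t_k))\| \leq \bar{L}\|x(t_k)\|$; this tacitly assumes $f$ vanishes at the regulated reference and should be stated explicitly to make the constant $c$, and hence the logarithmic bound, legitimate.
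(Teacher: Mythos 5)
Your proposal is correct and follows essentially the same route as the paper: recast the dynamics as $\dot{x}=f(x)+\bar{B}u_T$, bound $\frac{d}{dt}\|\epsilon\|$ by $\|\dot{x}\|$ with the frozen control, use the Lipschitz assumption to obtain the linear differential inequality in $\|\epsilon\|$, and invoke the comparison lemma with $\|\epsilon(t_k)\|=0$ to invert for $T_k$. The two caveats you flag (anchoring $\|f(x(t_k))\|\leq\bar{L}\|x(t_k)\|$ at a zero of $f$, and linking the trigger functional in the rule to $\|\epsilon\|_\infty$) are real gaps that the paper itself also leaves unaddressed, so your treatment is, if anything, slightly more careful.
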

\begin{proof}
Without loss of generality, the system described in (\ref{eq:statemodelcstrfunctional}) is recalled here as
\begin{equation}\label{eq:generalPlantModel}
\dot{x}(t) = f(x) + \bar{B}u_T(t)
\end{equation}
where $f(x)$ is same as (\ref{f(x)}) and $\bar{B} = [0 \hspace{3mm}\beta]^T$. Between $k^{th}$ and $(k+1)^{th}$ sampling instant in the execution of control, the discretization error (\ref{eq:discretizationError}) is non zero. $T_k$ is the time it takes the discretization error to rise from $0$ to some finite value. Thus,
\begin{equation}
 \frac{d}{dt}\|\epsilon(t)\| \leq \|\frac{d}{dt}\bar{\epsilon}(t)\| \leq \|\frac{d}{dt}x(t)\| 
\end{equation}
\begin{equation}
\Rightarrow \|\frac{d}{dt}\epsilon(t)\| \leq \| f(x(t)) + \bar{B} u_T(t) \|
\end{equation}
Substituting the control input (\ref{eq:controlLawEvent}) in the above inequality, we get
\begin{align}\label{eq:differentialInequality}
\|\frac{d}{dt}\epsilon(t)\| &\leq \| f(x(t)) - \bar{B}\lambda_2^{-1} \beta^{-1}\lambda^{T} f(x(t_k)) -\bar{B} \mu sign(\sigma(t_k)) \| \nonumber \\
 &\leq \bar{L}\|x(t)\| + \|\bar{B}\lambda_2^{-1} \beta^{-1}\lambda^{T}\| \bar{L} \|x(t_k)\| + \|\bar{B}\| \mu \nonumber \\
& \leq \bar{L}(\|\epsilon(t)\| + \|x(t_k)\|) +  \|\bar{B}\lambda_2^{-1} \beta^{-1}\lambda^{T}\| \bar{L} \|x(t_k)\| + \|\bar{B}\| \mu \nonumber \\
&\leq \bar{L}\|\epsilon(t)\| + \bar{L}(1 + \|\bar{B}\lambda_2^{-1} \beta^{-1}\lambda^{T}\|) \|x(t_k)\| + \|\bar{B}\| \mu
\end{align}
The solution to the differential inequality (\ref{eq:differentialInequality}) $\forall t \in [t_k, t_{k+1})$ can be understood by using Comparison Lemma \cite{khalil} with initial condition $\|\epsilon(t)\|=0$ and is given as
\begin{equation}
\|\epsilon(t)\| \leq \frac{\bar{L}(1 + \|\bar{B}\lambda_2^{-1} \beta^{-1}\lambda^{T}\|) \|x(t_k)\| + \|\bar{B}\| \mu}{\bar{L}} \Big( exp\{\bar{L}(t-t_k)\} - 1\Big)
\end{equation}
Comparison Lemma \cite{khalil}, \cite{alexander} is particularly useful when information on bounds on the solution is of greater significance than the solution itself.
For triggering time instant $t_{k+1}$,
\begin{equation}
\|\epsilon\|_\infty = \|\epsilon(t_{k+1})\| \leq \frac{\bar{L}(1 + \|\bar{B}\lambda_2^{-1} \beta^{-1}\lambda^{T}\|) \|x(t_k)\| + \|\bar{B}\| \mu}{\bar{L}} \Big( \exp\{\bar{L}T_k\} - 1\Big)
\end{equation}
\begin{equation}\label{eq:Tk}
\therefore \hspace{2mm} T_k \geq \frac{1}{\bar{L}}\hspace{2mm} ln \hspace{2mm} \Big( \frac{\bar{L}\|\epsilon\|_\infty}{\bar{L}(1 + \|\bar{B}\lambda_2^{-1} \beta^{-1}\lambda^{T}\|) \|x(t_k)\| + \|\bar{B}\| \mu} +1 \Big)
\end{equation}
As the right hand side of (\ref{eq:Tk}) is always positive, it is, therefore concluded that inter event execution time is bounded below by a finite positive quantity \cite{gdyn}. This concludes the proof.
\end{proof}

\section{Numerical Simulation}
The efficacy of the proposed control scheme is demonstrated by computer simulation of the given model for two scenarios, i.e., operation of CSTR under no disturbances and time varying disturbances. Following parametric values are used in the experiment.\\
$\mu = 25$, $\beta = 0.3$, $D_a = 0.078$, $\gamma = 20$, $B = 8$ and $x_{2_c} = 0$. $d_1$ and $d_2$ are exogenous disturbances of magnitude $0.026 \hspace{1mm}sin(0.1t)$ and $0.037 \hspace{1mm}sin(0.1t)$ respectively. Moreover, these disturbances are fixed by a positive upper bound, i. e., $|d_1| < |d_2| < |d|_\infty $. Surface coefficient weights $\lambda_1$ and $\lambda_2$ are chosen to be 1 and 2 respectively. $\zeta = \xi = 0.8$, $m_1 = 10^{-4}$, $m_2 = 0.2025$, $\psi = 0.5$ and $\varsigma = 0.97$ are taken to be parameters of the triggering rule. The startup reference trajectory to be tracked is taken as
\begin{equation}
y_{ref} = x_{2_{ss}} (1-k_1 \exp(-k_2 t))
\end{equation}
where $x_{2_{ss}} = 2.6516$ is close to an equilibrium point of the system. $k_2$ and $k_2$ are parameters that are dependent on practical restrictions on the reactor. Here $k_1 = 1 = k_2$. In \cite{acs}, it has been shown that the system has multiple steady state equilibrium points, one of which is $(x_1,x_2)=(0.4472,2.7517)$. Hence, in our experiment, we have provided $x_{2_{ss}}$ very close to this equilibrium point. Therefore, $x_1$ must remain close to $0.4472$ to ensure that the proposed control is worthy and robust.
\subsection{System operating in absence of disturbances}
When external disturbances $d_1$ and $d_2$ are zero, the response of the system under control protocol (\ref{eq:controlLawEvent}) has been shown in figures \ref{fig:compositionNoDisturbance} and \ref{fig:temperatureNoDisturbance}. The response is smooth and state trajectories remain close to the desired operating point for all time. It is also evident that the response is fast. The non-uniform sampling instants when the controller gets updated and the sampling interval (inter-event execution time) have been depicted in figure \ref{fig:instantAndIntervalNoDisturbance}.
\begin{figure}[H]
\begin{minipage}{.5\textwidth}
\centering
\includegraphics[width=\textwidth]{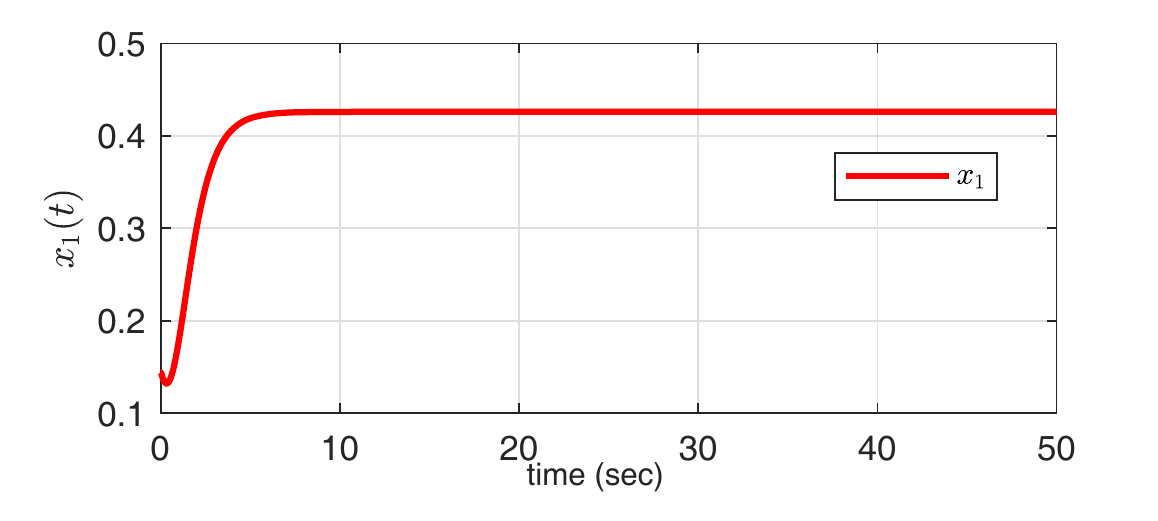}
\caption{Composition profile in absence of disturbances}
\label{fig:compositionNoDisturbance}
\end{minipage}%
\begin{minipage}{.5\textwidth}
\centering
\includegraphics[width=\textwidth]{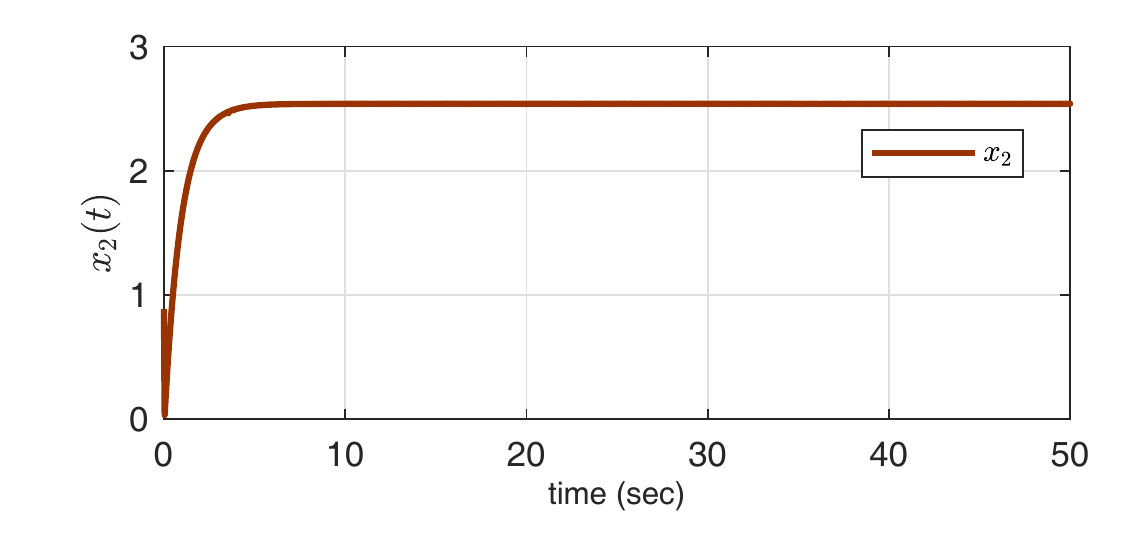}
\caption{Temperature profile in absence of disturbances}
\label{fig:temperatureNoDisturbance}
\end{minipage}%
\end{figure}
\begin{figure}[H]
\centering
\begin{minipage}{.5\textwidth}
\centering
\includegraphics[width=\textwidth]{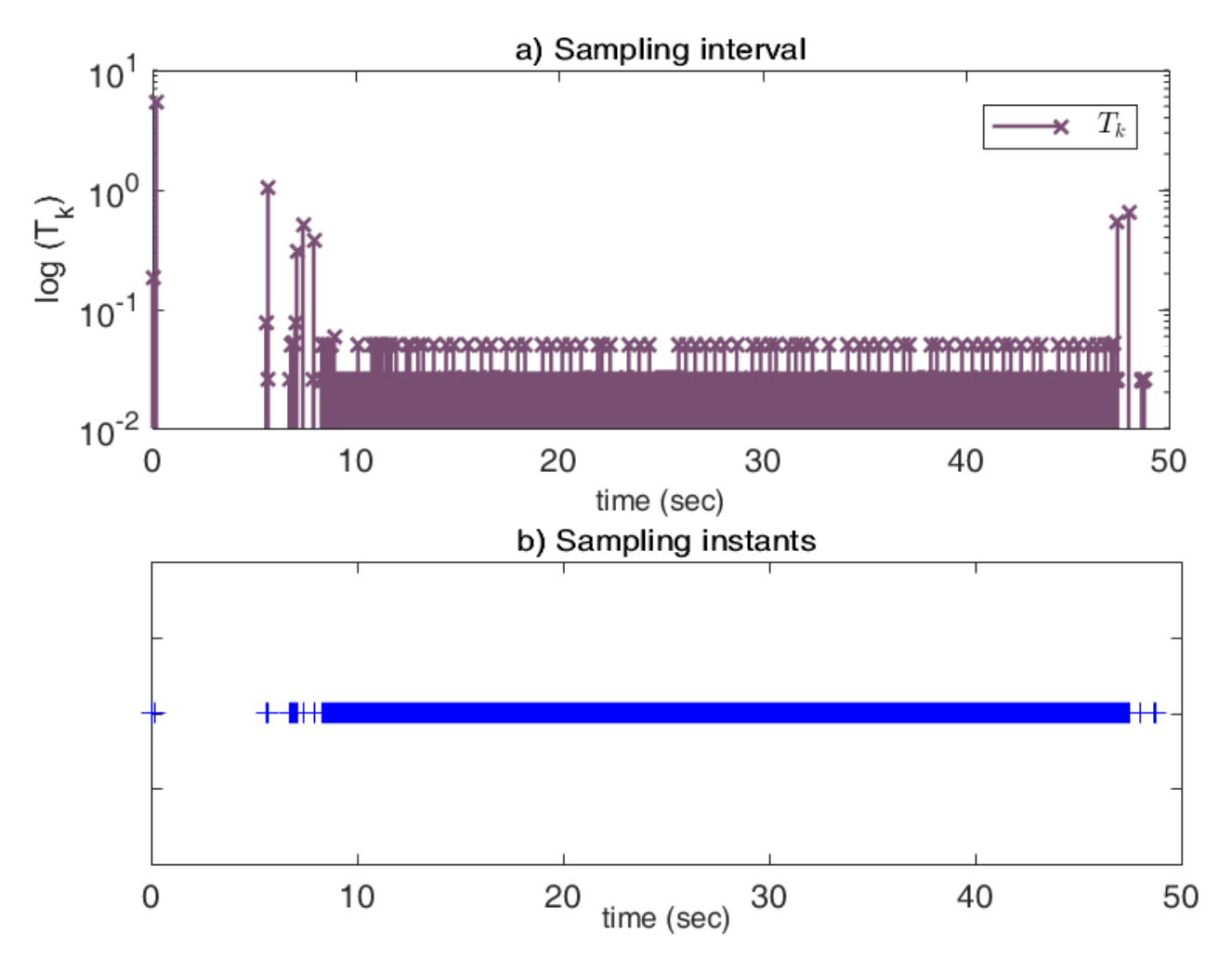}
\caption{Sampling instants and sampling intervals under absence of disturbances}
\label{fig:instantAndIntervalNoDisturbance}
\end{minipage}%
\end{figure}
\subsection{System operating in presence of disturbances}
Figures \ref{fig:compositionDisturbance} and \ref{fig:temperatureDisturbance} show the states of the system under the influence of control signal (\ref{eq:controlLawEvent}) when time varying disturbances $d_1$ and $d_2$ are also taken into account. It is clearly illustrated that the temperature control has been achieved equally well. It has been established earlier in the discussion that control should be synthesized in such a way that after achieving accurate temperature control, the composition should not deviate too much from the desired operating point. Although the desired operating point has been selected as $0.4472$ for the composition state, it can be seen from figure \ref{fig:compositionDisturbance} that the response remains bounded within a band between $0.4067$ and $0.4454$, which is indeed very close to the desired value. Sampling instants and sampling intervals for the case of system operating in the presence of disturbance have been shown in figure \ref{fig:instantAndIntervalDisturbance}. More samples are taken when exogenous disturbance affect the system.
\begin{figure}[H]
\begin{minipage}{.5\textwidth}
\centering
\includegraphics[width=\textwidth]{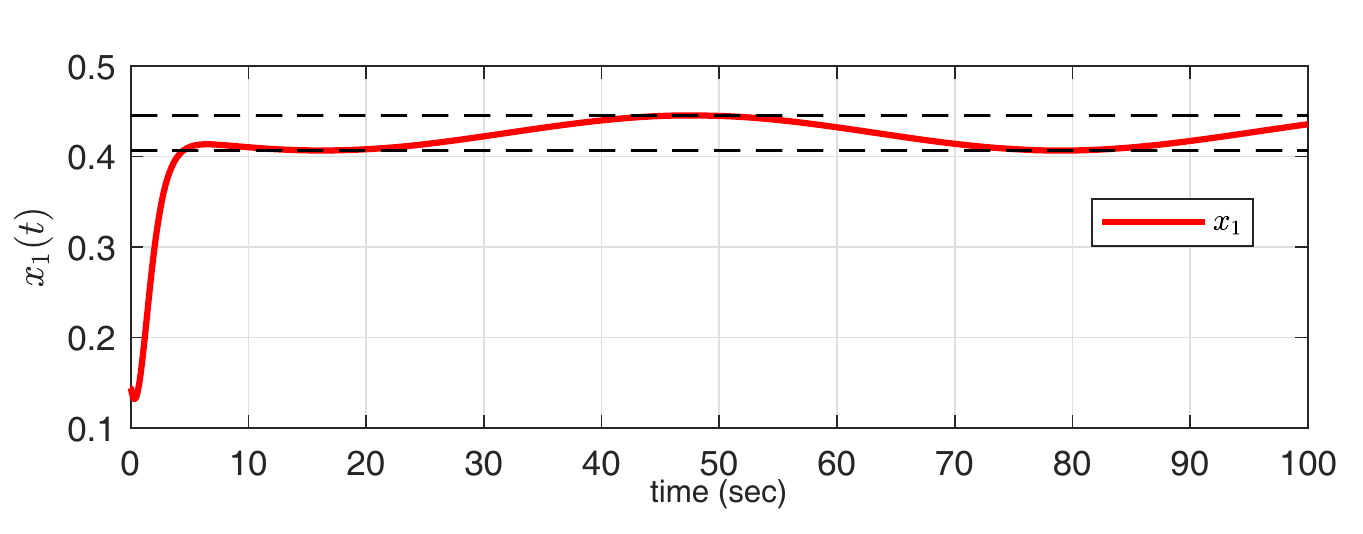}
\caption{Composition profile in presence of disturbances}
\label{fig:compositionDisturbance}
\end{minipage}%
\begin{minipage}{.5\textwidth}
\centering
\includegraphics[width=\textwidth]{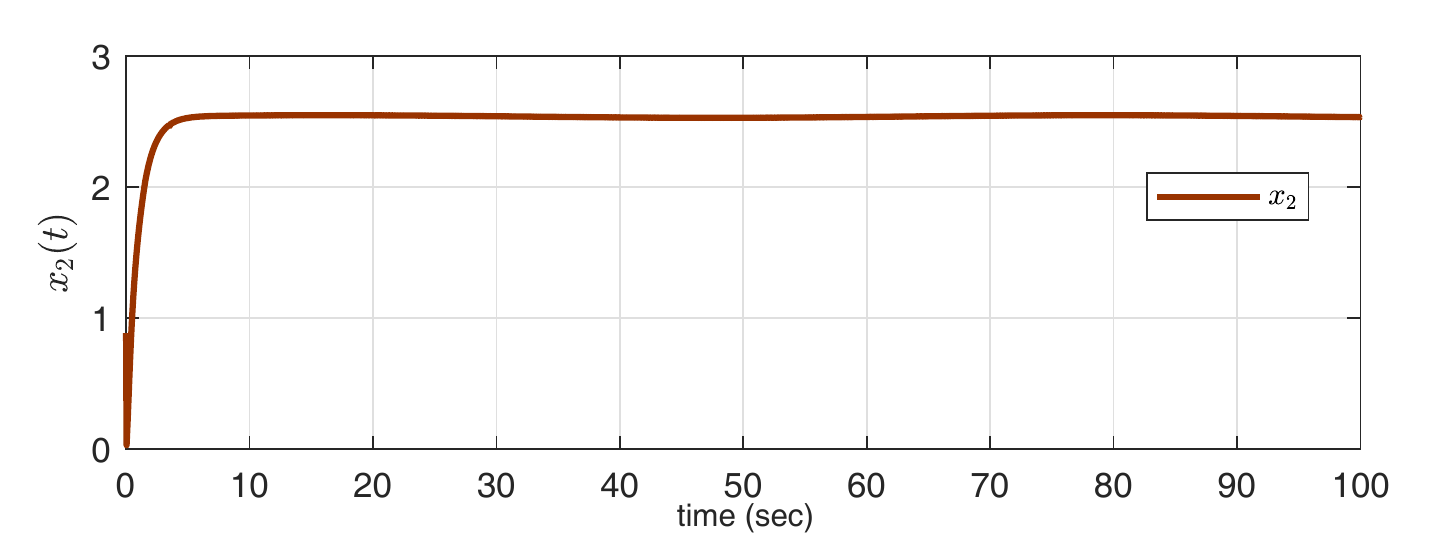}
\caption{Temperature profile in presence of disturbances}
\label{fig:temperatureDisturbance}
\end{minipage}%
\end{figure}
\begin{figure}[H]
\centering
\begin{minipage}{.5\textwidth}
\centering
\includegraphics[width=\textwidth]{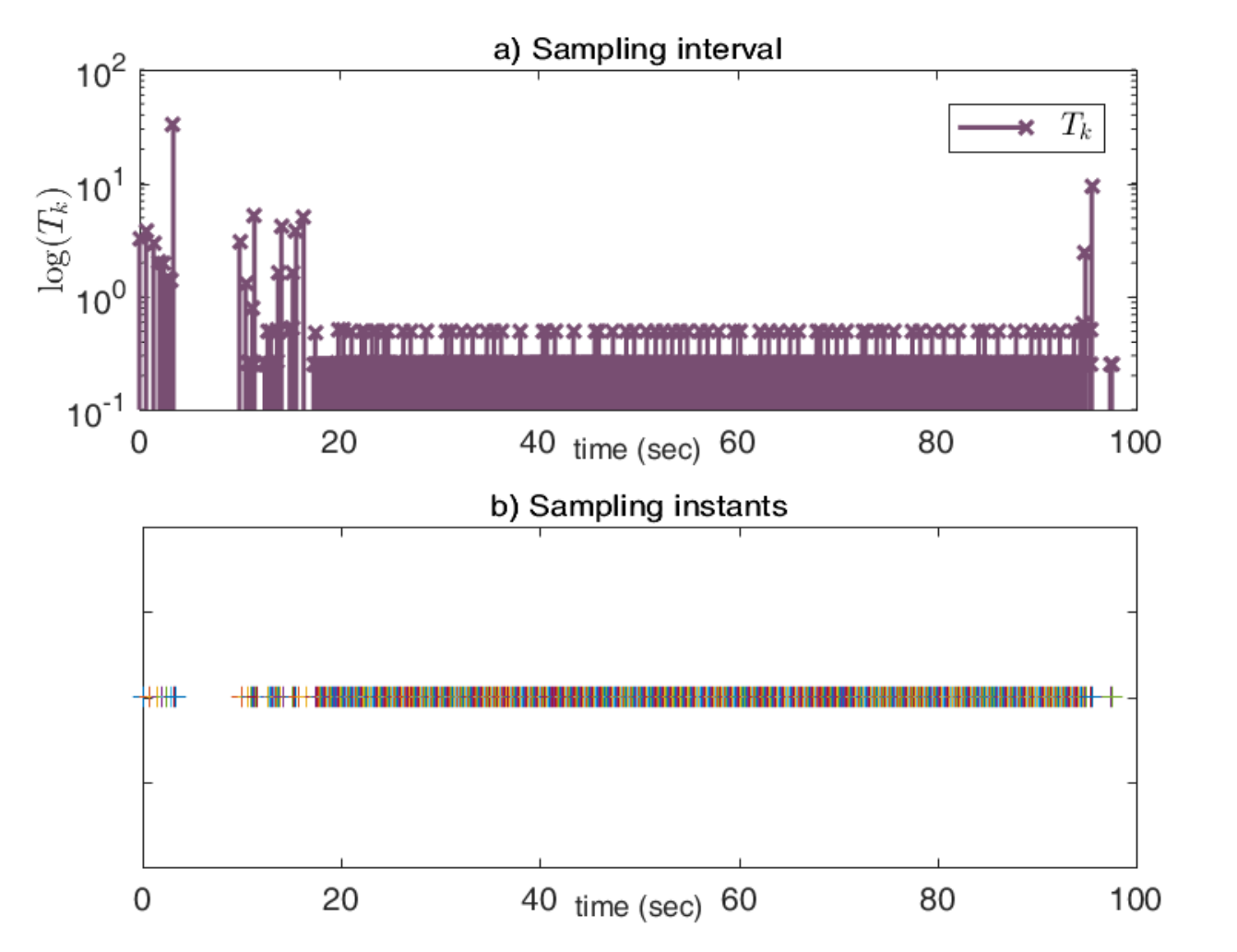}
\caption{Sampling instants and sampling intervals under presence of disturbances}
\label{fig:instantAndIntervalDisturbance}
\end{minipage}%
\end{figure}
Response of the system under the effect of time varying disturbances have been presented in the preceding subsection, wherein it has been clearly confirmed that the control is robust and external disturbances have negligible adverse effect on the desired performance. Further, three cases of temperature regulation have been considered in this work. The temperature was allowed to rise from zero to a set value and the control law (\ref{eq:controlLawEvent}) was used to regulate the temperature at the desired level. Here, we present temperature regulation at $300K$, $400K$ and $500K$ to support our proposition. The regulation cases have been depicted in figures \ref{fig:T300K}, \ref{fig:T400K} and \ref{fig:T500K}. The corresponding sampling instants and sampling intervals have been shown in figures \ref{fig:Tk300K}, \ref{fig:Tk400K} and \ref{fig:Tk500K} respectively. Contrary to periodic update of the controller in time-triggered case, the control signal has been updated with a fresh value only when an event occurs. This significantly reduces the computational and energy expenses, and promotes control by exception. It is observed that the set regulation is achieved quite fast and accurate, as desired.
\begin{figure}[H]
\begin{minipage}{.5\textwidth}
\centering
\includegraphics[width=\textwidth]{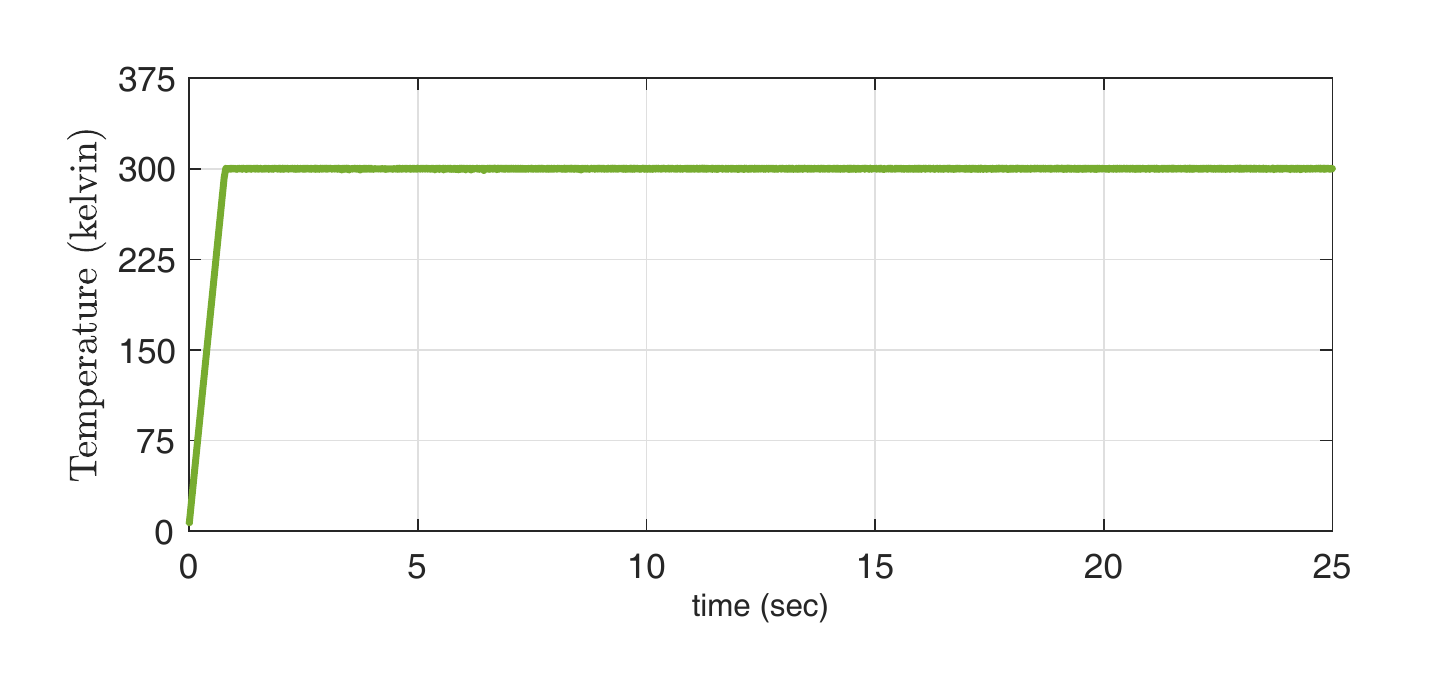}
\caption{Temperature regulation at 300K}
\label{fig:T300K}
\end{minipage}%
\begin{minipage}{.5\textwidth}
\centering
\includegraphics[width=\textwidth]{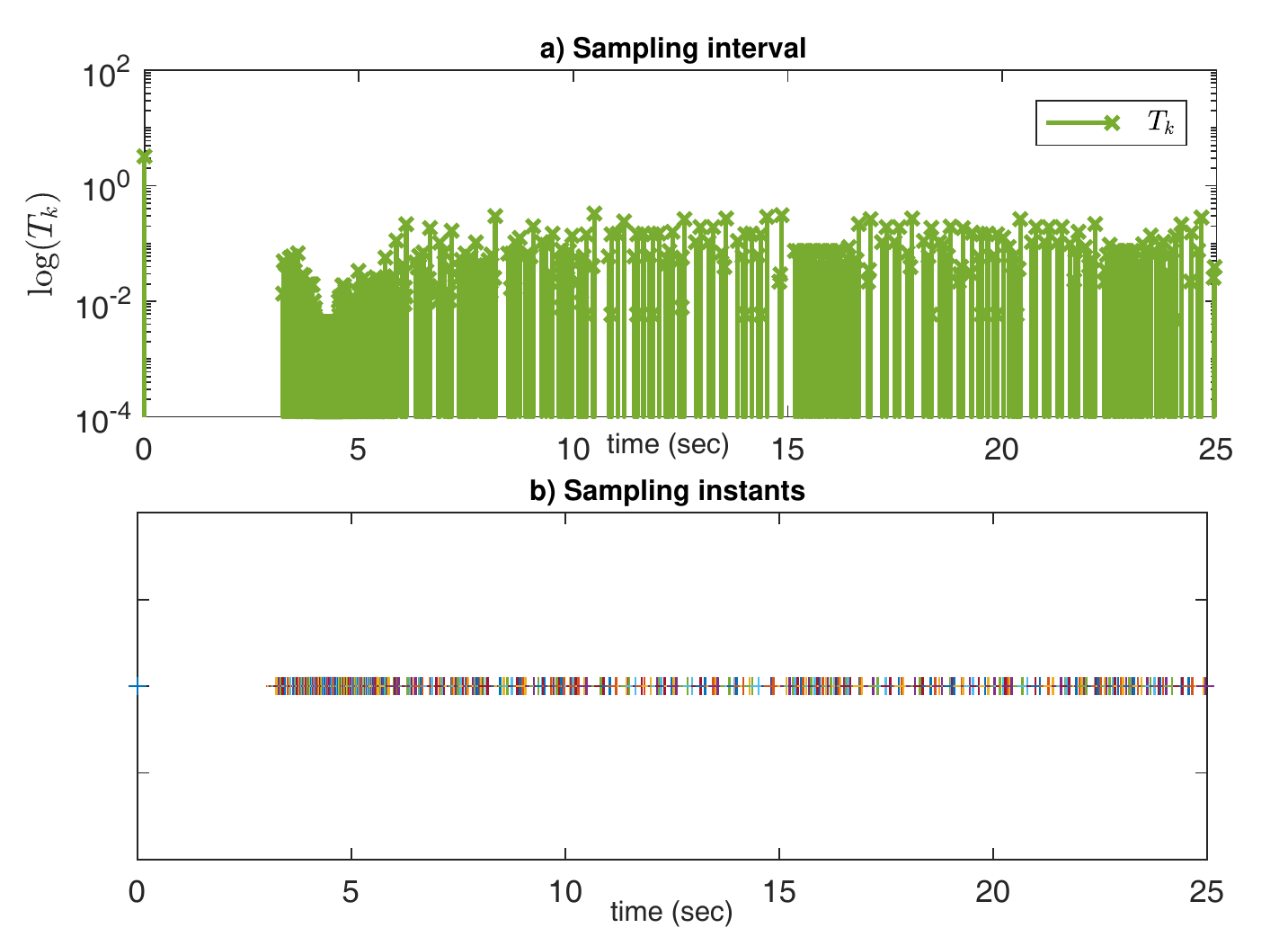}
\caption{Sampling instants and sampling intervals for temperature regulation at 300K}
\label{fig:Tk300K}
\end{minipage}%
\end{figure}
\begin{figure}[H]
\begin{minipage}{.5\textwidth}
\centering
\includegraphics[width=\textwidth]{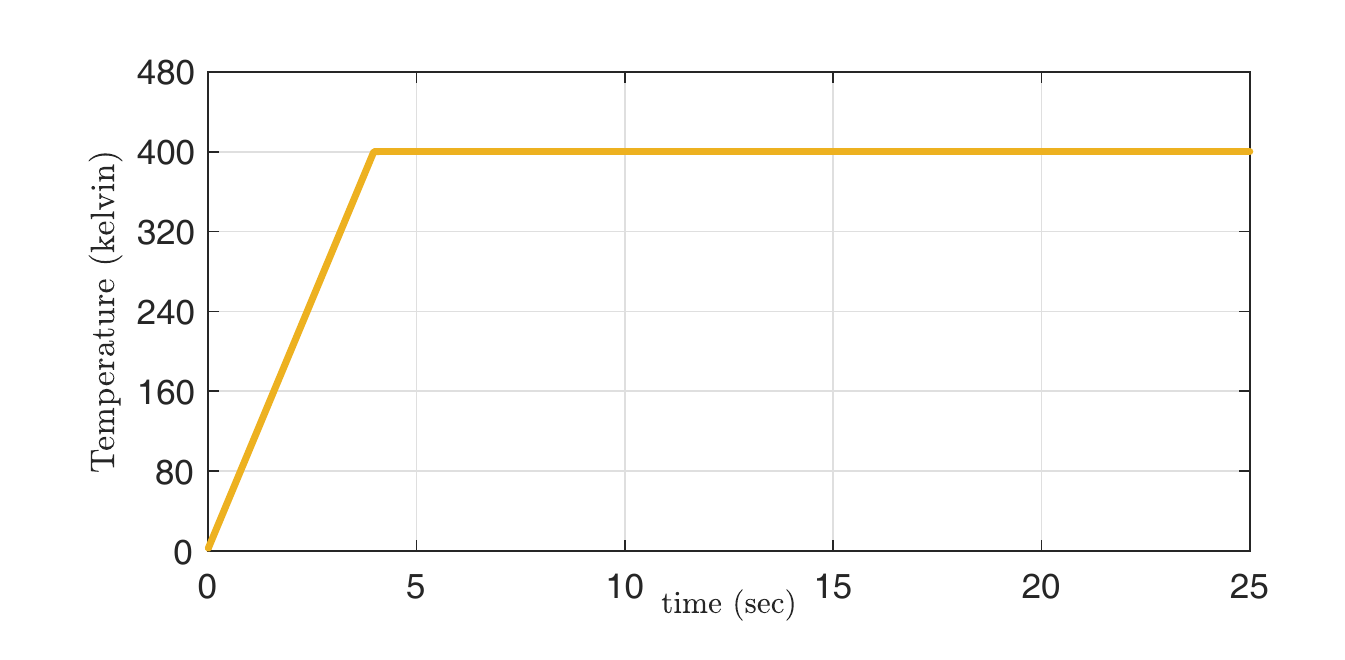}
\caption{Temperature regulation at 400K}
\label{fig:T400K}
\end{minipage}%
\begin{minipage}{.5\textwidth}
\centering
\includegraphics[width=\textwidth]{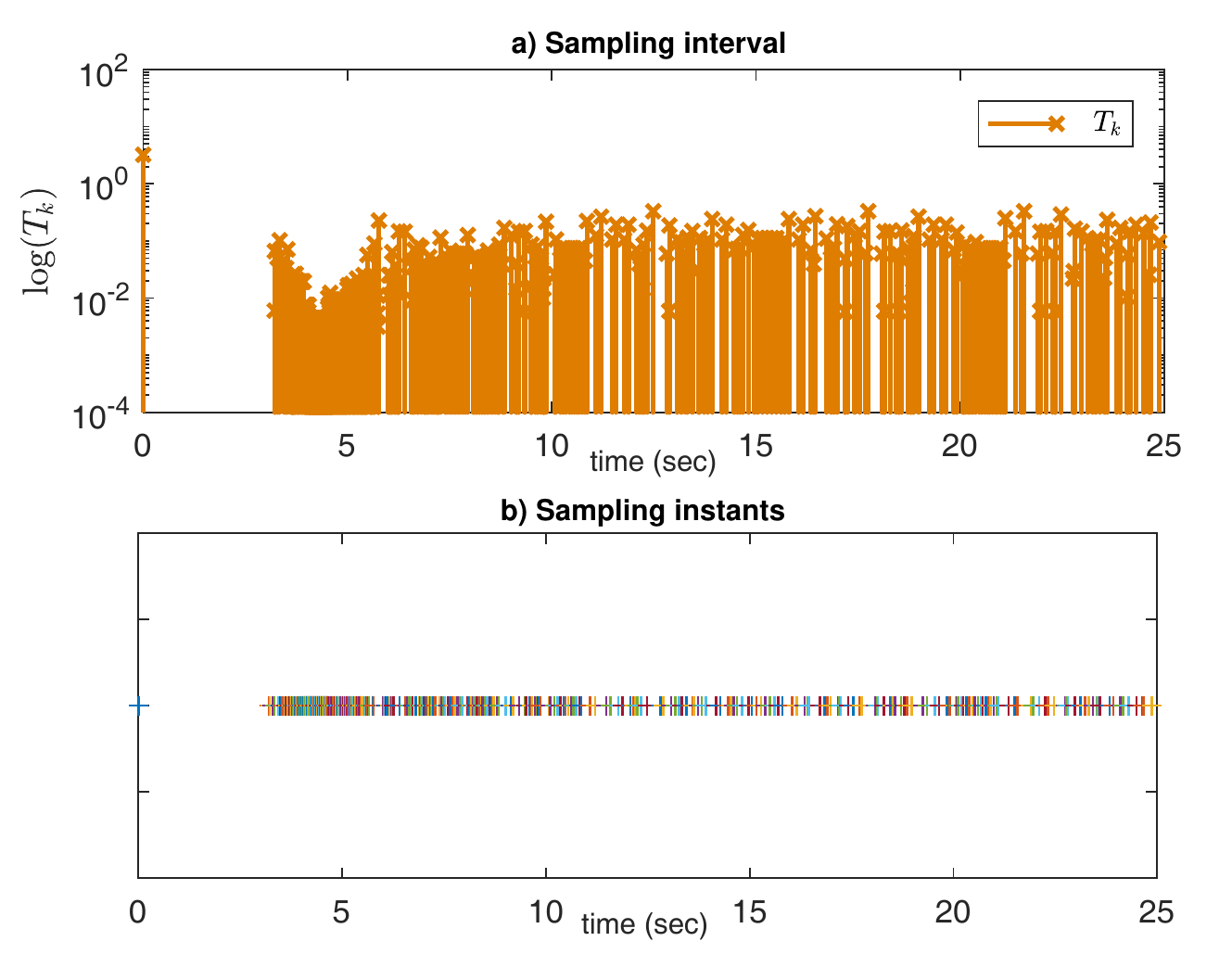}
\caption{Sampling instants and sampling intervals for temperature regulation at 400K}
\label{fig:Tk400K}
\end{minipage}%
\end{figure}
\begin{figure}[H]
\begin{minipage}{.5\textwidth}
\centering
\includegraphics[width=\textwidth]{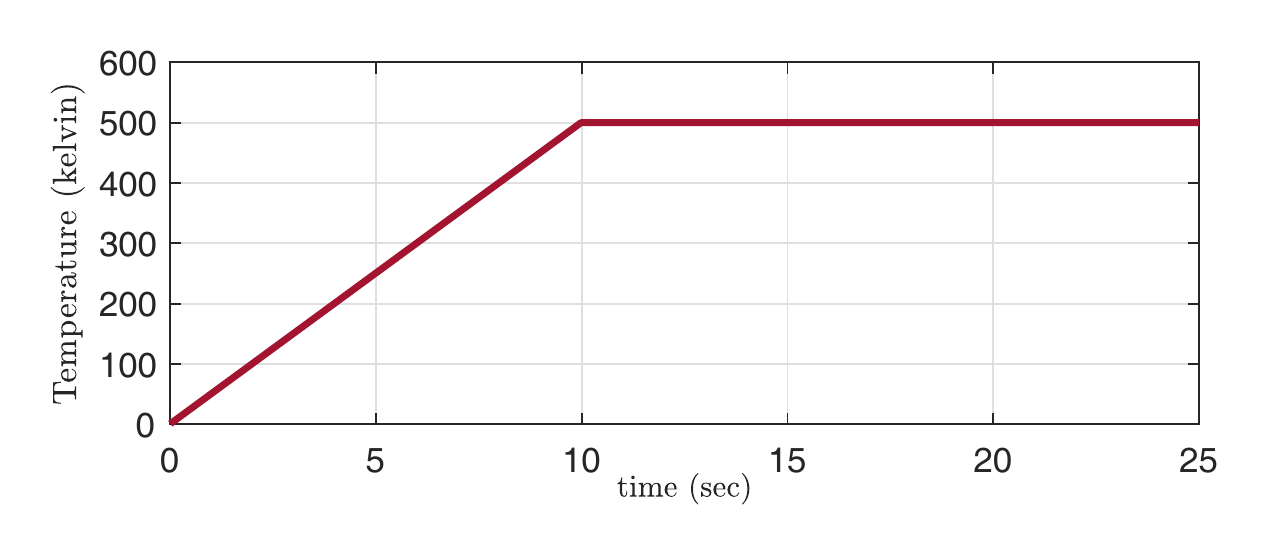}
\caption{Temperature regulation at 500K}
\label{fig:T500K}
\end{minipage}%
\begin{minipage}{.5\textwidth}
\centering
\includegraphics[width=\textwidth]{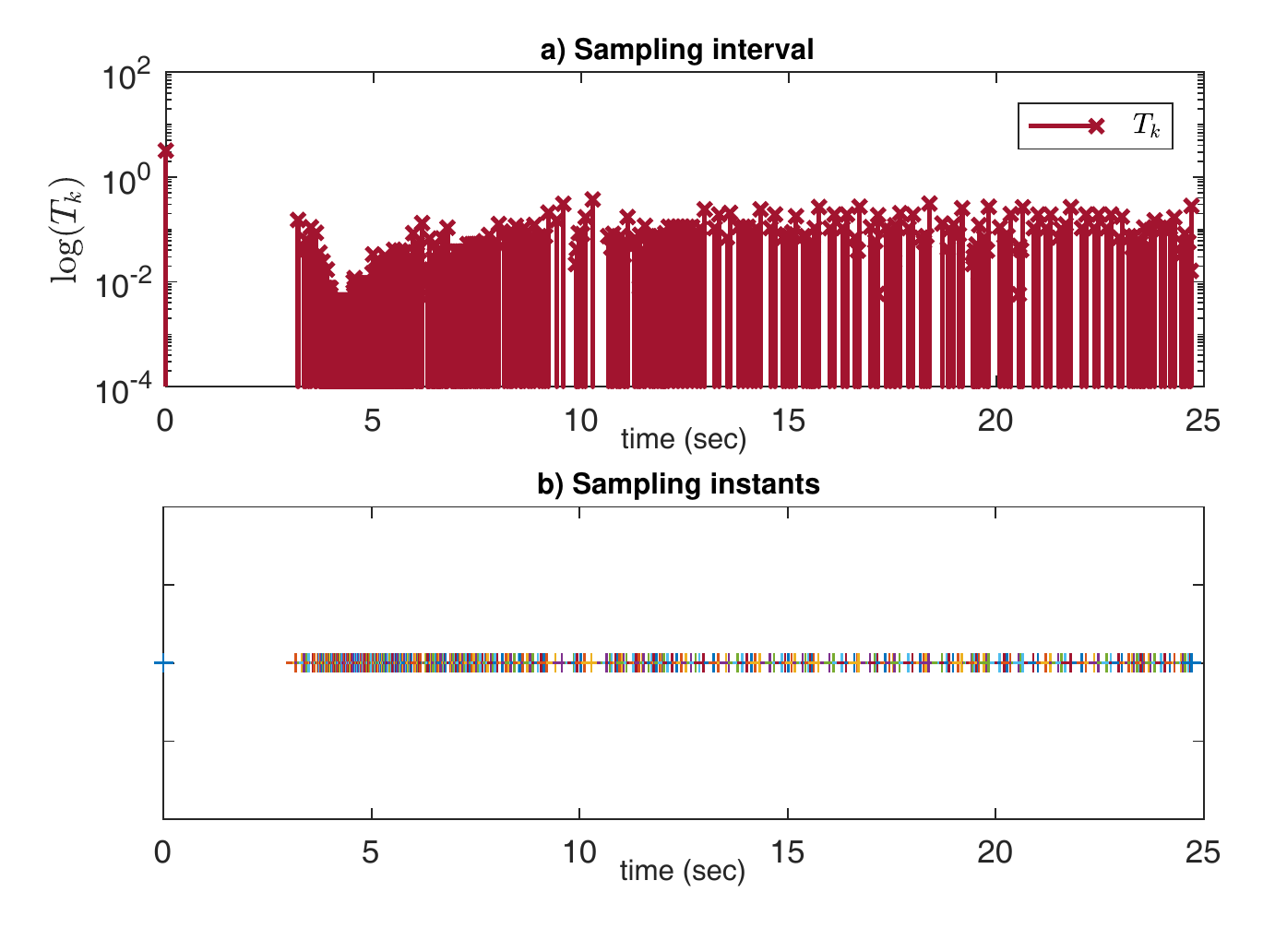}
\caption{Sampling instants and sampling intervals for temperature regulation at 500K}
\label{fig:Tk500K}
\end{minipage}%
\end{figure}
\section{Conclusion}
A novel nonlinear controller based on archetype of event triggered sliding mode has been designed to control a continuous stirred tank reactor. It has been shown that the controller is sturdy and provides stability to the system in a very short span of time. State trajectories have been maintained in close proximity of equilibrium points with minimum computation by the controller. Event triggering technique is one practical control application wherein resource utilization is minimal but optimal closed loop performance is not compromised. The proposed controller based on event triggering SMC provides stability to the system in the sense of Lyapunov. The inter event time is separated by a finite discrete time interval to ascertain no Zeno behavior results.
Numerical simulations are presented to confirm the effectiveness of the proposed event driven sliding mode control.

%
%

\section*{References}
\bibliography{mybibfile}

\end{document}